
\documentclass[letterpaper, 10 pt, conference]{ieeeconf}  

\IEEEoverridecommandlockouts                              

\overrideIEEEmargins                                      



\usepackage{cite}
\usepackage{amsmath,amssymb,amsfonts}
\usepackage{graphicx}
\usepackage{textcomp}

\usepackage[utf8]{inputenc}
\usepackage{graphics} 
\usepackage{epsfig} 
\usepackage{times} 

\usepackage{graphicx}      
\usepackage{siunitx}
\usepackage{pgfplots}
\usepackage{tikz,amsmath}
\usetikzlibrary{external}
\tikzexternalize[prefix=fig_tikz/]
\usepackage{tikzscale}
\usepackage{algpseudocode}
\usepackage{amsthm}
\usepackage{algorithm}
\usepackage{fancyhdr}
\usepackage{subcaption} 
\usepackage{eso-pic}
\usepackage{mathdots}
\usepackage{accents}

\usetikzlibrary{shapes,arrows}
\usetikzlibrary{positioning}

\usepackage{enumitem}

\newtheorem{thm}{Theorem}
\newtheorem{lem}[thm]{Lemma}
\newtheorem{defn}[thm]{Definition}

\theoremstyle{plain}

\pgfplotsset{compat=newest}
\pgfplotsset{plot coordinates/math parser=false}
\newlength\figureheight
\newlength\figurewidth

\newlength\defcolwidth

\newcommand*{\tran}{^{\mkern-1.5mu\mathsf{T}}}

\usepackage{tikz}
\usepackage{tikzscale}

\definecolor{myorange}{cmyk}{0,0.35,0.85,0} 
\definecolor{mypurple}{cmyk}{0.5,1,0,0} 

\definecolor{matblue1}{rgb}{0,0.4470,0.7410}
\definecolor{matred1}{rgb}{0.85,0.325,0.098}
\definecolor{matyel1}{rgb}{0.9290, 0.6940, 0.1250}
\definecolor{matpur1}{rgb}{0.4940, 0.1840, 0.5560}
\definecolor{matgre1}{rgb}{0.4660, 0.6740, 0.1880}
\definecolor{matblue2}{rgb}{0.3010, 0.7450, 0.9330}
\definecolor{matred2}{rgb}{0.6350, 0.0780, 0.1840}
\definecolor{matgrey1}{rgb}{0.5, 0.6, 0.7}
\definecolor{matpink1}{rgb}{1, 0.07, 0.65}
\definecolor{matblue3}{rgb}{0.07, 0.62, 1}

\newcommand{\greendashdot}{\raisebox{2pt}{\tikz{\draw[-,matgre1,densely dash dot,line width = 0.9pt](0,0) -- (3mm,0);}}}

\newcommand{\purdots}{\raisebox{2pt}{\tikz{\draw[-,matpur1,densely dotted,line width = 0.9pt](0,0) -- (3mm,0);}}}

\newcommand{\bluedash}{\raisebox{2pt}{\tikz{\draw[-,matblue1,dashed,line width = 0.9pt](0,0) -- (3mm,0);}}}

\newcommand{\redline}{\raisebox{2pt}{\tikz{\draw[-,matred1,solid,line width = 0.9pt](0,0) -- (3mm,0);}}}

\newcommand{\yelline}{\raisebox{2pt}{\tikz{\draw[-,matyel1,solid,line width = 0.9pt](0,0) -- (3mm,0);}}}


\bibliographystyle{IEEEtran}


\newcommand*\rev[1]{{\color{black}#1}}

\title{\LARGE \bf
Conjugate Gradient MIMO Iterative Learning Control \\ Using Data-Driven Stochastic Gradients*
}%

\author{Leontine Aarnoudse$^{1}$ and Tom Oomen$^{1,2}$
	\thanks{*This work is part of the research programme VIDI with project number 15698, which is (partly) financed by the NWO.}
	\thanks{$^{1}$Leontine Aarnoudse and Tom Oomen are with the Dept. of Mechanical Engineering, Control Systems Technology, Eindhoven University of Technology, Eindhoven, The Netherlands.}%
	\thanks{$^{2}$Tom Oomen is also with the Delft Center for Systems and Control, Delft University of Technology, Delft, The Netherlands.
		{\tt\small l.i.m.aarnoudse@tue.nl, t.a.e.oomen@tue.nl}}%
}%

\begin{document}
\AddToShipoutPictureBG*{%
	\AtPageUpperLeft{%
		\setlength\unitlength{1in}%
		\hspace*{\dimexpr0.5\paperwidth\relax}
		\makebox(0,-1)[c]{
			\parbox{\paperwidth}{ \centering
				Leontine Aarnoudse and Tom Oomen, Conjugate Gradient MIMO Iterative Learning Control Using Data-Driven Stochastic Gradients, \\
				In {\em 60th IEEE Conference on Decision and Control}, Austin, Texas, USA, 2021}}%
}}

\maketitle%
\thispagestyle{empty}%
\pagestyle{empty}%

\setlength\defcolwidth{7.85cm}

\setlength\figurewidth{.9\defcolwidth}
\setlength\figureheight{.7\figurewidth}

\begin{abstract}
Data-driven iterative learning control can achieve high performance for systems performing repeating tasks without the need for modeling. The aim of this paper is to develop a fast data-driven method for iterative learning control that is suitable for massive MIMO systems through the use of efficient unbiased gradient estimates. A stochastic conjugate gradient descent algorithm is developed that uses dedicated experiments to determine the conjugate search direction and optimal step size at each iteration. The approach is illustrated on a multivariable example, and it is shown that the method is superior \rev{to both the earlier stochastic gradient descent and deterministic conjugate gradient descent methods.}   

\end{abstract}

\section{Introduction} \label{sec:intro}

Direct data-driven approaches are advantageous in many control problems, because they avoid the costly process of modeling and identification \cite{Gevers2002,Hjalmarsson2005,Dorfler2021}, and do not suffer from performance limitations due to model uncertainties. Examples of data-driven methods include procedures for identifying system norms \cite{Hjalmarsson2001,Oomen2014} and tuning feedback controllers \cite{Hjalmarsson1999,Campi2002,Vanwaarde2020}.

In iterative learning control (ILC), measured data is used in conjunction with approximate models in order to design feedforward signals that are capable of rejecting repeating disturbances completely. \rev{This significantly increases the performance of systems that perform repeating tasks.} Examples of ILC frameworks include frequency-domain based inverse model ILC \cite{VanZundert2018c,Bristow2006a}, Arimoto-type ILC \cite{Arimoto1984}, and optimization-based approaches \rev{such as} norm-optimal ILC \cite{Gunnarsson2001,Owens2016} and gradient- and coordinate-descent \rev{ILC} \cite{Wu2018,Chen2019}. These partially model-based approaches require system knowledge in the form of invertible models in frequency-domain ILC, or certain properties of the system's Markov parameters. Since models are always approximate, the methods provide robustness against model uncertainty through $Q$-filters in frequency-domain ILC, regularization in norm-optimal ILC, or robust design \cite{VandeWijdeven2009,Owens2014,Son2016}.


In \cite{Bolder2018}, a model-free adjoint ILC algorithm is introduced, in which experiments on the adjoint system \cite{Wahlberg2010} are used to obtain the gradient of a cost criterion. These gradients are used in a gradient-descent type ILC algorithm \cite{Furuta1987}, enabling a complete data-driven design for MIMO systems. The approach is related to data-driven ILC approaches such as \rev{extremum-seeking based ILC} \cite{Khong2016}. \rev{In \cite{Bolder2018}, as well as in} comparable approaches for MIMO experiment-based iterative feedback tuning \cite{Hjalmarsson1999} and $H_\infty$-norm estimation \cite{Oomen2014}, the gradient of an $n_i \times n_o$ MIMO system is generated through $n_i \times n_o$ experiments. Because of the high number of experiments per iteration required in model-free adjoint ILC, the method does not scale well for massive MIMO systems.

Model-free ILC for massive MIMO systems is further developed in \cite{Aarnoudse2020c}, where the experimentally expensive deterministic gradient from \cite{Bolder2018} is replaced by an unbiased \rev{gradient} estimate, obtained from a single experiment. The estimate is used in a stochastic approximation adjoint ILC (SAAILC) algorithm based on stochastic gradient descent. \rev{The SAAILC algorithm requires far fewer experiments to reach the same cost compared to the deterministic gradient descent ILC in \cite{Bolder2018}.} However, in terms of the number of iterations needed to converge, gradient descent algorithms are known to be slow due to their lack of curvature information. In \cite{Bolder2018}, first steps are made towards a data-driven quasi-Newton adjoint ILC algorithm using \rev{Broyden-Fletcher-Goldfarb-Shanno (BFGS)} updates \rev{to increase the convergence speed}, but this approach is not directly applicable to stochastic gradients.


Although several steps have been made towards efficient data-driven iterative learning control, an \rev{optimal data-driven approximate approach} is underdeveloped. The aim of this paper is to develop an approach which converges faster than standard gradient descent, while using experimentally efficient gradient estimates. The contribution of this paper is threefold:
\begin{itemize}
	\item A conjugate gradient ILC algorithm using unbiased gradient approximations is developed. It is shown that the standard expressions for conjugate gradient methods are not applicable in case of stochastic gradients. Instead, two additional experiments are used to determine the conjugate search direction and the optimal step size.
	\item An analysis of related methods, \rev{including gradient descent ILC and deterministic conjugate gradient descent ILC} is provided. 
	\item The proposed approach is illustrated using a random MIMO system.
\end{itemize}

Preliminary research results related to improving gradient estimation, in particular by obtaining unbiased estimates in a fast manner, are presented in \cite{Aarnoudse2020c}. \rev{There, the} gradient estimates were used in a relatively naive gradient descent algorithm. The current paper contains the above three contributions that are completely new.

The problem of increasing convergence speed for stochastic optimization algorithms is well-studied in the field of machine learning, where the use of mini-batching results in non-deterministic algorithms. As classical quasi-Newton methods are not applicable in stochastic situations \cite{Bottou2018}, much research is aimed at developing stochastic quasi-Newton methods. Examples include the work by \cite{Schraudolph2007} on stochastic BFGS and limited-memory BFGS. In \cite{Mokhtari2014}, regularization is added to the stochastic BFGS method, and \cite{Moritz2016} combines stochastic BFGS methods with variance reduced stochastic gradients. A recent overview is given by \cite{Mokhtari2020}. These methods typically use multiple gradient evaluations of one mini-batch to obtain locally deterministic Hessian estimates, a consistency assumption that cannot be satisfied for the stochastic gradients used in ILC, see \cite{Aarnoudse2020c}. 

Recently, methods that do not depend on this consistency have been developed in \cite{Wills2021}, where Gaussian processes are used to model the inverse Hessian, and \cite{Wills2020}, where the quasi-Newton secant condition is relaxed in a direct least-squares estimate of the inverse Hessian. However, model-free ILC uses quadratic objectives, for which the Hessian is constant and conjugate gradients are the method of choice \cite[Section 8.3]{Murphy}. Therefore, in this paper an approach is proposed that is based on conjugate gradients.

The paper is structured as follows. In Section \ref{sec:problem}, the problem considered in this paper is introduced. In Section \ref{sec:approach}, the proposed conjugate gradient ILC algorithm is developed. In Section \ref{sec:comp}, related methods are analyzed in comparison to the proposed method. In Section \ref{sec:sims} the approach is illustrated using simulations. Conclusions are given in Section \ref{sec:conclusions}.

%
%
%
%

\section{Problem formulation} \label{sec:problem}

In this section, the problem considered in this paper is introduced. Consider \rev{the aim of finding a control signal that minimizes the error of a system, which is expressed by the criterion}
\begin{align} \label{eq:cost}
	\mathcal{J}(f) = \|r-Jf\|^2_2.
\end{align}
Here, $\|x\|_2 = \sqrt{x\tran x}$ and the \rev{unknown} MIMO system $J$ with $n_i$ inputs and $n_o$ outputs is given in lifted form by 
\begin{align} 
\underbrace{\begin{bmatrix} e^1 \\ \vdots \\ e^{n_o} \end{bmatrix}}_e &= 	 \underbrace{\begin{bmatrix} r^1 \\ \vdots \\ r^{n_o} \end{bmatrix}}_r - \underbrace{\begin{bmatrix} J^{11} & \dots & J^{1n_i} \\ \vdots & & \vdots \\ J^{n_o 1} & \dots & J^{n_o n_i} \end{bmatrix}}_J \underbrace{\begin{bmatrix} f^1 \\ \vdots \\ f^{n_i} \end{bmatrix}}_f 
\end{align} 
with input $f$, error $e$, unknown exogenous disturbance $r$ \rev{and output $\rev{y = Jf}$}. Here, $J^{lm}\in \mathbb{R}^{N \times N}$ for finite signal length $N \in \mathbb{Z}^+$, and  $y^l, e^l, r^l, f^m \in \mathbb{R}^{N \times 1}$ for $l=1,...,n_o$, $m=1,...,n_i$. An example of system $J$, which can represent both open-loop and closed-loop systems, is shown in Figure \ref{fig:control_scheme}.

The aim of this paper is to develop an efficient data-driven approach to minimizing (\ref{eq:cost}). To this end, judiciously chosen experiments experiments are used to generate the gradient $\frac{\partial \mathcal{J}}{\partial f}$ (more specifically, an unbiased estimate thereof). The approach, which is based on conjugated gradients, is introduced in the next section.

\begin{figure}[t]	
	\centering
	\includegraphics{./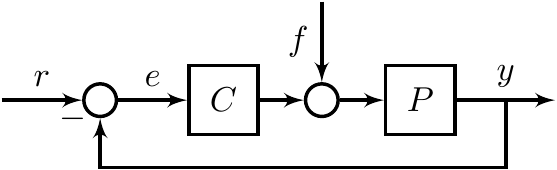}
	\caption{Closed-loop system with reference $r$, error $e$ and feedforward input $f$, for which $J = P(I+CP)^{-1}$ is the process sensitivity of the closed-loop system.}
	\label{fig:control_scheme}
\end{figure}

\section{Stochastic conjugate gradient ILC} \label{sec:approach}
In this section, model-free conjugate gradient ILC is introduced. First, a suitable search direction is found. Secondly, the optimal step size in this direction is computed. Thirdly, it is explained how unbiased gradient estimates can be obtained from experiments and lastly, the implementation of model-free conjugate gradient ILC is explained.

Consider again criterion (\ref{eq:cost}), and assume that estimates $\hat{g}(f)$ of the gradient $g(f) = \frac{\partial \mathcal{J}}{\partial f}$ can be obtained for which
\begin{align*}
	\mathbb{E}(\hat{g}(f)) = g(f),
\end{align*}
i.e., the estimates $\hat{g}(f)$ are unbiased. In addition, while the system $J$ is unknown, \rev{noise-free} evaluations of $Jf$ are available through experiments. Since the criterion $\mathcal{J}(f)$ is quadratic, it can be minimized by setting the gradient $g(f)$ equal to zero, i.e., by solving $g(f) = 0$ with
\begin{align} \label{eq:grad}
	g(f) = 2 J\tran J f- 2 J\tran r .
\end{align}
This is equivalent to solving a system of linear equations, given by
\begin{align} \label{eq:axb}
	2 J\tran J f = 2 J\tran r.  
\end{align} 
Since the system $J$ is unknown in a model-free setting, (\ref{eq:axb}) is solved iteratively based on data by using parameter updates of the form
\begin{align} \label{eq:fjplus1}
	f_{j+1} &= f_j + \varepsilon_j p_j,
\end{align} 
with step size $\varepsilon_j$ and search direction $p_j$. Note that for $p_j = g(f_j)$, a gradient descent algorithm is recovered. \rev{If a model $J$ is available, standard norm-optimal ILC \cite{Gunnarsson2001} is recovered by taking $\varepsilon_j p_j = (J\tran J)^{-1} J\tran e_j$.}

\subsection{Conjugate search directions}
The main idea of conjugate gradient descent for quadratic problems is that fast convergence can be achieved by minimizing \rev{the criterion (\ref{eq:cost})} along a sequence of conjugated gradient directions.

\begin{defn}
 Two vectors $x$ and $y$ are $A$-conjugate if
\begin{align}
	x\tran A y = 0. 
\end{align}	
\end{defn} 
	
Let $g_j$ denote the gradient at iteration $j$, i.e., $g_j = g(f_j)$. By taking the initial search direction $p_1$ equal to the initial gradient $g_1$ and choosing all subsequent search directions such that $i\neq j \implies p_i \tran J\tran J p_j = 0$ for $1 \leq i < j \leq N n_i$, i.e., $p_i$ and $p_j$ are conjugate with respect to $J\tran J$, a sequence of Krylov subspaces is generated \cite[Section 11.3]{Golub2013} that is given by 
\begin{align}
	S_j &= \mathcal{K}(J\tran J, g_1,j) \nonumber \\ 
	&= \text{span} \left\{ g_1, J\tran J g_1, (J\tran J)^2 g_1,..., (J\tran J)^{j-1} g_1 \right\}.
\end{align}  
For each iteration, an exact line search is used that ensures that $f_j$ solves
\begin{align} 
	\min_{f_j \in \tilde{S}_j} \mathcal{J}(f_j),
\end{align}
with
\begin{align}
	\tilde{S}_j = \text{span} \left\{f_1, g_1, J\tran J g_1,..., (J\tran J)^{j-1} g_1 \right\}.
\end{align}
Since $\tilde{S}_{j+1}$ includes both $f_j$ and the gradient $g_j$, it is guaranteed that the update based on conjugated gradients is at least as good as the steepest descent update \cite{Allwright1976}.

For conjugate gradient ILC, the initial search direction $p_1$ in (\ref{eq:fjplus1}) is chosen equal to the unbiased gradient estimate $\hat{g}_1 = \hat{g}(f_1)$. Subsequent search directions are given by
\begin{align} \label{eq:directions}
	p_{j+1} &= \hat{g}_{j+1} + \tau_j p_j, 
\end{align}
where the scalar $\tau_j$ is chosen such that the direction $p_{j+1}$ is $J\tran J$-conjugate to the previous search direction $p_{j}$. Since $\hat{g}_j \neq g_j$, standard conjugate gradient expressions cannot be applied, as is shown in Section \ref{sec:comp}. For the stochastic conjugate gradient approach using gradient estimate $\hat{g}_j$, the expression for $\tau_j$ is given in the following theorem.
\begin{thm} \label{thm:directions}
	The search directions $p_{j+1}$ and $p_j$ are $J\tran J$-conjugate if 
	\begin{align}  \label{eq:tau}
		\tau_j = - \frac{(J p_j)\tran (J  \hat{g}_{j+1})}{(J p_j)\tran (J p_j)}
	\end{align}
\end{thm}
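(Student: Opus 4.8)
The plan is to verify the conjugacy condition directly from its definition, since the usual closed-form expressions for $\tau_j$ (Fletcher--Reeves, Polak--Ribi\`ere) are unavailable here. By the definition of $J\tran J$-conjugacy, the directions $p_{j+1}$ and $p_j$ are conjugate precisely when $p_{j+1}\tran J\tran J p_j = 0$. First I would substitute the search-direction recursion (\ref{eq:directions}), namely $p_{j+1} = \hat{g}_{j+1} + \tau_j p_j$, into this bilinear form and expand by linearity to obtain
\begin{align*}
	p_{j+1}\tran J\tran J p_j = \hat{g}_{j+1}\tran J\tran J p_j + \tau_j\, p_j\tran J\tran J p_j .
\end{align*}

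Next I would set the right-hand side to zero and solve for the scalar $\tau_j$. This is immediate provided $Jp_j \neq 0$, since then $p_j\tran J\tran J p_j = \|Jp_j\|_2^2 > 0$, so the denominator does not vanish. Solving gives
\begin{align*}
	\tau_j = -\frac{\hat{g}_{j+1}\tran J\tran J p_j}{p_j\tran J\tran J p_j} .
\end{align*}
The final algebraic step is to rewrite numerator and denominator in the claimed data-accessible form. I would factor $J\tran J$ as a Gram product so the denominator becomes $(Jp_j)\tran(Jp_j)$, and write the numerator as $\hat{g}_{j+1}\tran J\tran J p_j = (J\hat{g}_{j+1})\tran(Jp_j)$; since this is a scalar it equals its own transpose $(Jp_j)\tran(J\hat{g}_{j+1})$, which recovers (\ref{eq:tau}) exactly.

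The derivation is elementary, so the real content is not a computational obstacle but the structural observation I would emphasize in the write-up: the resulting $\tau_j$ depends on the unknown system $J$ only through the two products $Jp_j$ and $J\hat{g}_{j+1}$, and never through $J\tran J$ or $J$ in isolation. Both products are noise-free evaluations of the map $f \mapsto Jf$, which are available from experiments, so (\ref{eq:tau}) is genuinely computable in the model-free setting via the two additional experiments. The main subtlety to flag is therefore why a fresh derivation is needed at all: the standard conjugate-gradient formulas for $\tau_j$ are derived using orthogonality and exact-gradient identities (e.g.\ $g_{j+1}\tran p_j = 0$ and $g_j\tran g_{j+1}=0$) that hold only for the true gradient, and these fail once the unbiased estimate $\hat{g}_j$ replaces $g_j$. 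Returning to the defining condition $p_{j+1}\tran J\tran J p_j = 0$ sidesteps those broken identities and yields an expression valid for the stochastic estimate.
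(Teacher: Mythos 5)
Your proposal is correct and follows essentially the same route as the paper's proof: substitute the recursion $p_{j+1} = \hat{g}_{j+1} + \tau_j p_j$ into the defining conjugacy condition $p_j\tran J\tran J p_{j+1} = 0$, solve the resulting linear equation for $\tau_j$, and rewrite the quadratic forms as Gram products $(Jp_j)\tran(J\hat{g}_{j+1})$ and $(Jp_j)\tran(Jp_j)$ so that the expression is computable from experiments alone. Your additional observations --- the nondegeneracy condition $Jp_j \neq 0$ guaranteeing a nonzero denominator, and the explanation of why the Fletcher--Reeves/Polak--Ribi\`ere shortcuts fail for $\hat{g}_j \neq g_j$ --- are sound refinements that the paper states only implicitly (the latter in Section \ref{subsection:exact_grad}), but they do not change the argument.
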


\begin{proof}
	Because $p_{j+1}$ and $p_j$ are chosen to be $J\tran J$-conjugate, it holds that
	\begin{align} \label{eq:p_con}
		p_j\tran J\tran J p_{j+1} = 0.
	\end{align}
	Substituting (\ref{eq:directions}) in (\ref{eq:p_con}) gives 
	\begin{align}
		p_j\tran J\tran J p_{j+1} = p_j\tran J\tran J \hat{g}_{j+1} + \tau_j p_j\tran J\tran J p_j = 0,
	\end{align}
	from which it follows that
	\begin{align}
		\tau_j p_j\tran J\tran J p_j &=	- p_j\tran J\tran J \hat{g}_{j+1},  \\
		\tau_j &= - \frac{(J p_j)\tran (J \hat{g}_{j+1})}{(J p_j)\tran (J p_j)}. 
	\end{align}
\end{proof}

Note that although the system $J$ is unknown, the terms $J p_j$ \rev{and $\rev{J \hat{g}_{j+1}}$} in the expression for $\tau_j$ can be evaluated through experiments on the system.

\subsection{Step size selection} 

The optimal step size $\varepsilon_j$ in (\ref{eq:fjplus1}) for a general search direction $p_j$ is given by
\begin{align} 
	\varepsilon_j &= \arg \min_{\varepsilon} \mathcal{J}(f_{j+1}) \nonumber \\ \label{eq:opt_e}
	&= \arg \min_{\varepsilon} \|r-J(f_j + \varepsilon_j p_j)\|^2_2.
\end{align}
Since the criterion (\ref{eq:cost}) is quadratic, (\ref{eq:opt_e}) can be solved analytically. However, for a search direction $p_j$ that is based on gradient estimate $\hat{g}_j$, the standard conjugate gradient expression for $\varepsilon_j$ cannot be applied. Instead, the optimal step size for stochastic conjugate gradient ILC is given in the following theorem.

\begin{thm} \label{thm:stepsize}
	The optimal step size $\varepsilon_j$ that minimizes (\ref{eq:opt_e}) is given by
	\begin{align} \label{eq:stepsize}
		\varepsilon_j =  \rev{\frac{ e_j \tran (J p_j)}{(J p_j)\tran (J p_j)}}.
	\end{align}
\end{thm}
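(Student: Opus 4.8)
The plan is to reduce the minimization in (\ref{eq:opt_e}) to an elementary one-dimensional quadratic problem in the scalar $\varepsilon$. The first step is to rewrite the residual inside the norm in terms of the current error. Since $e_j = r - J f_j$, I would observe that
\begin{align*}
	r - J(f_j + \varepsilon p_j) = (r - J f_j) - \varepsilon J p_j = e_j - \varepsilon (J p_j),
\end{align*}
so that the objective becomes $\mathcal{J}(f_{j+1}) = \|e_j - \varepsilon (J p_j)\|_2^2$. This substitution is the only conceptual move in the argument: it replaces the unknown quantities $r$ and $J f_j$ by the error $e_j$, which is directly measurable, and it exposes the fact that $f_{j+1}$ depends on $\varepsilon$ only through the single vector $J p_j$, which is available from an experiment.

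Next I would expand the squared norm using $\|x\|_2^2 = x\tran x$, giving a scalar quadratic in $\varepsilon$,
\begin{align*}
	\mathcal{J}(f_{j+1}) = e_j\tran e_j - 2\varepsilon\, e_j\tran (J p_j) + \varepsilon^2 (J p_j)\tran (J p_j).
\end{align*}
The leading coefficient $(J p_j)\tran (J p_j) = \|J p_j\|_2^2$ is nonnegative, and strictly positive whenever $J p_j \neq 0$, so the quadratic is strictly convex and attains a unique minimizer. I would then set the derivative with respect to $\varepsilon$ to zero,
\begin{align*}
	\frac{\partial \mathcal{J}(f_{j+1})}{\partial \varepsilon} = -2\, e_j\tran (J p_j) + 2\varepsilon\, (J p_j)\tran (J p_j) = 0,
\end{align*}
and solve for $\varepsilon_j$ to obtain exactly the expression (\ref{eq:stepsize}).

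There is no serious obstacle here, since the problem is a textbook convex quadratic line search; the only points requiring care are bookkeeping ones. First, one must justify the identification $e_j = r - J f_j$ so that the formula is expressed entirely in measurable terms rather than in the unknown $r$ and $J$. Second, to claim that the stationary point is a genuine minimizer one should note strict convexity, i.e. $\|J p_j\|_2^2 > 0$; this holds provided $p_j$ is not in the null space of $J$, which is the generic situation for a nonzero search direction and can be assumed. Finally, I would emphasize the same observation the authors make for $\tau_j$: although $J$ is unknown, both $J p_j$ and the error $e_j$ can be obtained through a single dedicated experiment, so $\varepsilon_j$ is computable in the model-free setting.
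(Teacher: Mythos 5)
Your proof is correct and follows essentially the same route as the paper: both treat $\mathcal{J}(f_{j+1})$ as a scalar quadratic in $\varepsilon$ and set its derivative to zero, with your version merely substituting $e_j = r - Jf_j$ up front (making the algebra cleaner) where the paper expands first and collects into $e_j$ at the end. Your added remark on strict convexity via $\|Jp_j\|_2^2 > 0$ is a small refinement the paper omits, but it does not change the argument.
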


\begin{proof}
	Criterion (\ref{eq:opt_e}) is minimized by setting
	\begin{align}
		\frac{\partial \mathcal{J}(f_{j+1})}{\partial \varepsilon} = 0.
	\end{align}
	It holds that
	\begin{align}
	&\frac{\partial}{\partial \varepsilon}	\left(r-J(f_j + \varepsilon p_j) \right)\tran \left(r-J(f_j + \varepsilon p_j) \right) \\ \nonumber
	&= f_j\tran J\tran J p_j + p_j\tran J\tran J f_j + \rev{2} \varepsilon p_j \tran J\tran J p_j - p_j \tran J\tran r - r\tran J p_j,
	\end{align}
	from which it follows that
	\begin{align}
		\varepsilon_j&= \frac{p_j\tran J\tran (r-Jf_j) + (r-J f_j)\tran J p_j}{\rev{2}p_j\tran J\tran J p_j} \nonumber \\
		&= \rev{\frac{e_j \tran (J p_j)}{(J p_j)\tran (J p_j)}}.
	\end{align}
\end{proof}

\subsection{Unbiased gradient estimates through experiments on $J$}
\label{subsec:lcss}

Unbiased estimates of the gradient 
\begin{align} \label{eq:grad2}
		\rev{g(f_j) = 2 J\tran J f_j- 2 J\tran r = -2 J\tran e_j}
\end{align}
can be generated through experiment on $J$ by noting that $J\tran$ is the adjoint operator of $J$ and relates to $J$ through a time reversal, as described in the following.

\begin{defn}
	Let $\langle f,g \rangle = f\tran g$ denote the inner product of two signals $f,g \in \mathbb{R}^{N \times 1}$. The adjoint $J^*$ of $J$ is defined as the operator that satisfies the condition
	\begin{align} \nonumber
		\langle f,Jg \rangle = \langle J^* f, g \rangle, \quad \forall f,g \in \mathbb{R}^{N\times 1}.
	\end{align}
	The adjoint $J^*$ of $J$ is given by $J\tran$, which follows from
	\begin{align} \nonumber
		f^\top Jg = (J^* f)\tran g = f\tran (J^*)\tran g, \quad \forall f,g \in \mathbb{R}^{N \times 1}.
	\end{align}
\end{defn}

\begin{lem} \label{lem:adj_siso}
	The adjoint of a SISO system $J = J^{11}$ is given by $\left(J^{11}\right)\tran = \mathcal{T} J^{11} \mathcal{T}$, where the involutory permutation matrix 
	\begin{align*} 
		\mathcal{T} = \begin{bmatrix}
			0 & \dots &  0 & 1 \\
			\vdots &  &  1 & 0 \\
			0 & \iddots &  & \vdots \\
			1 & 0 & \dots & 0  
		\end{bmatrix} \in \mathbb{R}^{N \times N}
	\end{align*}
	has the interpretation of a time-reversal operator. 
\end{lem}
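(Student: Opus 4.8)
The plan is to reduce the claim to two structural facts: that the lifted matrix of a SISO LTI system has constant diagonals, and that conjugation by $\mathcal{T}$ amounts to a $180^\circ$ rotation of a matrix. First I would make explicit that time-invariance forces $J^{11}$ to be Toeplitz, i.e.\ that there exists a sequence of Markov parameters $\{h_k\}$ such that $(J^{11})_{ij} = h_{i-j}$ for all $i,j \in \{1,\dots,N\}$ (with $h_k = 0$ for $k<0$ if one additionally assumes causality, though causality is not actually needed for the identity). This convolution structure is the only place where the LTI assumption enters, and it is precisely what makes the statement hold for an LTI $J^{11}$ rather than for an arbitrary matrix.

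Next I would characterize $\mathcal{T}$ entrywise. From the anti-diagonal pattern, $(\mathcal{T})_{ij} = 1$ exactly when $i+j = N+1$ and $0$ otherwise, so that $\mathcal{T}^2 = I$ (confirming that $\mathcal{T}$ is involutory, as stated) and the induced map $x \mapsto \mathcal{T}x$ obeys $(\mathcal{T}x)_i = x_{N+1-i}$, i.e.\ it reverses the order of the samples, which is the time-reversal interpretation. Since left multiplication by $\mathcal{T}$ reverses the rows and right multiplication reverses the columns, it follows that
\begin{align*}
	(\mathcal{T} J^{11} \mathcal{T})_{ij} = (J^{11})_{N+1-i,\,N+1-j}.
\end{align*}

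The final step combines the two facts. Substituting the Toeplitz relation yields
\begin{align*}
	(\mathcal{T} J^{11} \mathcal{T})_{ij} = h_{(N+1-i)-(N+1-j)} = h_{j-i} = (J^{11})_{ji} = \left((J^{11})\tran\right)_{ij},
\end{align*}
which holds for every pair $(i,j)$ and therefore establishes $(J^{11})\tran = \mathcal{T} J^{11} \mathcal{T}$. I expect the only genuine obstacle to be the Toeplitz reduction: one must argue carefully that time-invariance of the underlying system produces the constant-diagonal structure of the lifted matrix, because the identity fails for general matrices. Once the convention $(\mathcal{T})_{ij} = \delta_{i+j,\,N+1}$ is fixed, the index bookkeeping in the rotation step is entirely routine.
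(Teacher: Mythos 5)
Your proof is correct: the identity $\mathcal{T} J^{11} \mathcal{T} = \left(J^{11}\right)\tran$ is precisely the persymmetry of a Toeplitz matrix, and your entrywise computation $(\mathcal{T} J^{11} \mathcal{T})_{ij} = (J^{11})_{N+1-i,\,N+1-j} = h_{j-i} = \left((J^{11})\tran\right)_{ij}$ settles it, with the LTI (Toeplitz) structure correctly identified as the only substantive hypothesis and causality correctly noted as unnecessary. The paper itself states this lemma without proof, as a known fact from the adjoint-ILC literature (cf.\ the cited work of Wahlberg and co-authors); your argument is the standard one that underlies it, so there is nothing to reconcile.
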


For SISO systems, Lemma \ref{lem:adj_siso} enables the measurement of the gradient using a single experiment by performing two time reversals. However, this is not applicable to non-symmetric MIMO systems, as is shown next.

\begin{lem} \label{lem:adj_mimo}
	The adjoint of a MIMO system $J$ is given by
	\begin{align} \label{eq:adj_mimo} 
		J\tran &=  \begin{bmatrix} (J^{11})\tran & \dots & (J^{n_o 1})\tran \\ \vdots & & \vdots \\ (J^{1 n_i})\tran & \dots & (J^{n_o n_i})\tran \end{bmatrix}   \\ \nonumber
		&=\underbrace{\begin{bmatrix} \mathcal{T} & & 0 \\ & \ddots & \\ 0 & & \mathcal{T} \end{bmatrix}}_{\mathcal{T}^{n_i}} 
		\underbrace{\begin{bmatrix} J^{11} & \dots & J^{n_o 1} \\ \vdots & & \vdots \\ J^{1 n_i} & \dots & J^{n_o n_i} \end{bmatrix}}_{\tilde{J}} \underbrace{\begin{bmatrix} \mathcal{T} & & 0 \\ & \ddots & \\ 0 & & \mathcal{T} \end{bmatrix}}_{\mathcal{T}^{n_o}}.
	\end{align}
\end{lem}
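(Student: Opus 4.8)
The plan is to verify the claimed factorization block by block, using the SISO adjoint result of Lemma~\ref{lem:adj_siso}. The first equality is essentially the definition of the transpose of a block-partitioned matrix: transposing $J$ both reverses the arrangement of its $N\times N$ blocks \emph{and} transposes each block individually, so the block in position $(m,l)$ of $J\tran$ equals $(J^{lm})\tran$. This yields the first displayed matrix directly.

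For the second equality, I would first rewrite each SISO block using Lemma~\ref{lem:adj_siso}, giving $(J^{lm})\tran = \mathcal{T} J^{lm} \mathcal{T}$. The goal is then to show that the factorization $\mathcal{T}^{n_i}\tilde{J}\,\mathcal{T}^{n_o}$ reproduces exactly these blocks. The first thing to pin down is the $(m,l)$ block of $\tilde{J}$: by construction $\tilde{J}$ transposes only the block \emph{layout} of $J$ while leaving each block untransposed, so its $(m,l)$ block is $J^{lm}$.

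The key computation is to exploit that $\mathcal{T}^{n_i}$ and $\mathcal{T}^{n_o}$ are block-diagonal with $\mathcal{T}$ on every diagonal block. Left-multiplication by $\mathcal{T}^{n_i}$ multiplies each block row of $\tilde{J}$ by $\mathcal{T}$, and right-multiplication by $\mathcal{T}^{n_o}$ multiplies each block column by $\mathcal{T}$. Hence the $(m,l)$ block of $\mathcal{T}^{n_i}\tilde{J}\,\mathcal{T}^{n_o}$ is $\mathcal{T}\,J^{lm}\,\mathcal{T}$, which by Lemma~\ref{lem:adj_siso} equals $(J^{lm})\tran$. Matching this against the $(m,l)$ block of $J\tran$ established above shows the two expressions agree block by block, completing the proof.

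I expect the only genuine source of difficulty to be notational bookkeeping: one must carefully separate transposition of the block layout from transposition of the individual blocks, and keep the index swap $(l,m)\mapsto(m,l)$ consistent so that the dimensions $(n_iN)\times(n_oN)$ of both sides align. Once the block indices are fixed, the statement follows as a direct entrywise consequence of the SISO lemma.
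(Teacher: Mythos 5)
Your proof is correct. The paper states Lemma~\ref{lem:adj_mimo} without any proof at all, and your blockwise verification --- separating transposition of the block \emph{layout} (the index swap $(l,m)\mapsto(m,l)$, giving $\tilde{J}$) from transposition of the individual blocks, then noting that the block-diagonal factors $\mathcal{T}^{n_i}$ and $\mathcal{T}^{n_o}$ act as $\mathcal{T}(\cdot)\mathcal{T}$ on each block so that Lemma~\ref{lem:adj_siso} applies entrywise --- is precisely the argument the paper implicitly relies on. One point worth making explicit in your write-up: Lemma~\ref{lem:adj_siso} is legitimate per block only because each $J^{lm}$ is the lifted convolution (Toeplitz) matrix of a causal LTI SISO subsystem; for an arbitrary $N\times N$ matrix, $\mathcal{T}J^{lm}\mathcal{T}$ merely reverses rows and columns and equals $(J^{lm})\tran$ only under persymmetry, which Toeplitz blocks satisfy.
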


Since $\tilde{J} \neq J$ for non-symmetric MIMO systems, the term $J\tran e_j$, and therefore the gradient (\ref{eq:grad2}), cannot be determined exactly from a single experiment on $J$. However, it is possible to obtain an unbiased estimate $\hat{g}(f_j)$ of the gradient, for which $\mathbb{E}(\hat{g}(f_j)) = g(f_j)$, from a single experiment on $J$ as described in the following \cite{Aarnoudse2020c}.

\begin{lem}
	An unbiased estimate $\hat{g}(f)$ of (\ref{eq:grad}) is given by
	\begin{align} \label{eq:grad_bar}
		\hat{g}(f_j) = -2 \mathcal{T}^{n_i} A_j J A_j \mathcal{T}^{n_o} e_j.
	\end{align}
	The matrix $A_j \in \mathbb{R}^{(N n_i) \times (N n_o)}$ is given by
	\begin{align} \label{eq:A} 
		A_j = \begin{bmatrix}
			a^{11}_j & \dots & a^{1n_o}_j\\
			\vdots & \ddots &  \vdots \\
			a^{n_i 1}_j & \dots & a^{n_i n_o}_j 
		\end{bmatrix} \otimes I_N
	\end{align}
	where $I_N$ is the $N\times N$ identity matrix and the entries $a^{lm}_j$ are samples from a symmetric Bernoulli $\pm 1$ distribution, i.e., $a^{lm}_j\in \{-1,1\}$ and the probabilities are given by $P(a^{lm}_j = 1) = 1/2$ and $P(a^{lm}_j = -1) = 1/2$. 
\end{lem}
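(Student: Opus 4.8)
The plan is to exploit that the only random object in \eqref{eq:grad_bar} is $A_j$, while $\mathcal{T}^{n_i}$, $\mathcal{T}^{n_o}$ and $e_j = r - Jf_j$ are deterministic once $f_j$ is fixed. Pulling these factors outside the (conditional) expectation gives $\mathbb{E}(\hat g(f_j)) = -2\,\mathcal{T}^{n_i}\,\mathbb{E}(A_j J A_j)\,\mathcal{T}^{n_o} e_j$, so the entire claim reduces to the single matrix identity
\begin{align*}
	\mathbb{E}(A_j J A_j) = \tilde{J}.
\end{align*}
Once this is shown, Lemma \ref{lem:adj_mimo} supplies $\mathcal{T}^{n_i}\tilde{J}\,\mathcal{T}^{n_o} = J\tran$, and hence $\mathbb{E}(\hat g(f_j)) = -2J\tran e_j = g(f_j)$ by \eqref{eq:grad2}, which is exactly unbiasedness.

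To establish the identity I would compute $A_j J A_j$ block by block using the Kronecker structure $A_j = a_j \otimes I_N$, where $a_j = [a^{lm}_j]$ is the $n_i \times n_o$ matrix of signs so that the $(p,q)$ block of $A_j$ is $a^{pq}_j I_N$. Writing $J$ in its $n_o \times n_i$ block form with blocks $J^{qm}$ and carrying out the two multiplications, a direct calculation shows that the $(p,r)$ block of $A_j J A_j$, for $p = 1,\dots,n_i$ and $r = 1,\dots,n_o$, is
\begin{align*}
	\sum_{m=1}^{n_i}\sum_{q=1}^{n_o} a^{pq}_j\, a^{mr}_j\, J^{qm},
\end{align*}
where the identity factors $I_N$ ensure that each $N\times N$ block $J^{qm}$ is transported through unchanged.

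Next I would take the expectation termwise. Since the entries $a^{lm}_j$ are independent, zero-mean and satisfy $(a^{lm}_j)^2 = 1$, we have $\mathbb{E}(a^{pq}_j a^{mr}_j) = \delta_{pm}\delta_{qr}$, so every cross term cancels and only the term with $m = p$, $q = r$ survives, leaving $J^{rp}$. Therefore the $(p,r)$ block of $\mathbb{E}(A_j J A_j)$ equals $J^{rp}$, which is precisely the $(p,r)$ block of $\tilde J$ as defined in \eqref{eq:adj_mimo}. This verifies $\mathbb{E}(A_j J A_j) = \tilde J$ and completes the argument.

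The main obstacle I anticipate is the index bookkeeping rather than any analytic difficulty: the double sum effectively permutes the input/output block indices of $J$ into those of $\tilde J$ (the block transposition), while the Kronecker factor $I_N$ leaves the internal $N\times N$ structure of each block untouched. Keeping these two actions cleanly separated is where care is needed. The argument also relies essentially on the independence of all $n_i n_o$ entries of $a_j$, not merely on their zero mean and unit variance, since it is independence that forces the off-diagonal expectations $\mathbb{E}(a^{pq}_j a^{mr}_j)$ with $(p,q)\neq(m,r)$ to vanish.
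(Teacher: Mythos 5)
Your proof is correct. Note that the paper itself states this lemma without proof, deferring it to \cite{Aarnoudse2020c}, so there is no in-paper argument to compare against; your derivation is the natural one and is, in substance, the argument of the cited source. The reduction to the single identity $\mathbb{E}(A_j J A_j) = \tilde{J}$ is valid because, conditionally on $f_j$ (and hence on $e_j = r - Jf_j$), the only randomness in \eqref{eq:grad_bar} is $A_j$; your block computation giving the $(p,r)$ block of $A_j J A_j$ as $\sum_{m=1}^{n_i}\sum_{q=1}^{n_o} a^{pq}_j\, a^{mr}_j\, J^{qm}$ is dimensionally consistent ($A_j J A_j \in \mathbb{R}^{(Nn_i)\times(Nn_o)}$, the shape of $\tilde{J}$), and $\mathbb{E}(a^{pq}_j a^{mr}_j) = \delta_{pm}\delta_{qr}$ indeed leaves exactly $J^{rp}$, the $(p,r)$ block of $\tilde{J}$ in \eqref{eq:adj_mimo}, after which Lemma \ref{lem:adj_mimo} and \eqref{eq:grad2} complete the unbiasedness claim. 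Two minor refinements. First, since in the algorithm $f_j$ is itself a function of the earlier draws $A_1,\dots,A_{j-1}$, the unbiasedness is conditional on the past, i.e., it requires $A_j$ to be drawn independently of $f_j$; your phrase ``once $f_j$ is fixed'' implicitly assumes this and it is worth stating explicitly. Second, your closing remark that full mutual independence of all $n_i n_o$ entries is essential is slightly stronger than necessary: because $(a^{lm}_j)^2 \equiv 1$, zero mean together with pairwise independence (indeed, mere uncorrelatedness of distinct entries) already forces $\mathbb{E}(a^{pq}_j a^{mr}_j)=\delta_{pm}\delta_{qr}$, which is all the termwise expectation uses.
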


The unbiased gradient estimates generated according to (\ref{eq:grad_bar}) can be used in the stochastic gradient descent ILC algorithm (\ref{eq:fjplus1}), \rev{with conjugate directions according to (\ref{eq:directions}) and Theorem \ref{thm:directions}, and optimal step sizes given by (\ref{eq:stepsize}).}

%
%
%

\subsection{Implementation of conjugate gradient ILC}

The implementation of conjugate gradient ILC is outlined in Algorithm \ref{alg:cg}. \rev{Note that} $J p_{j-1}$ does not need to be measured in order to determine $\tau_{j-1}$, as it is already known from the computation of $\varepsilon_{j-1}$. As such, determining the conjugate search direction and the step size each require one dedicated experiment per iteration.

\begin{algorithm}[t]
	\caption{Stochastic Conjugate Gradient ILC} 	\label{alg:cg}
	\begin{algorithmic}[1]
		\State{\textbf{for} $j=1:n_{\text{iteration}}$}
		\State{\quad Apply input $f_j$ and measure $e_j = r - J f_j$.}
		\State{\quad Find approximation $\hat{g}_j$ of $g_j = -2J\tran e_j$ using }
		\Statex{\quad one experiment according to (\ref{eq:grad_bar}).}
		\State{\quad \textbf{if} $j=1$}
		\State{\qquad Set $p_1 = \hat{g}(f_1)$.}
		\State{\quad \textbf{else} } 		
		\State{\qquad Measure $J \hat{g}(f_j)$ and use $J p_{j-1}$ to find $\tau_{j-1}$ in (\ref{eq:tau}).}
		\State{\qquad Take direction $p_j = \hat{g}_j + \tau_{j-1} p_{j-1}$.}
		\State{\quad \textbf{end}}
		\State{\quad Measure $J p_{j}$ to find step size $\varepsilon_j$ in (\ref{eq:stepsize}).}	
		\State{\quad Update $f_{j+1} = f_j + \varepsilon_j p_j$.}		
		\State{\textbf{end}}	
	\end{algorithmic}
\end{algorithm}

\section{Analysis of related approaches} \label{sec:comp}

In this section, related methods are analyzed in comparison to the proposed stochastic conjugate gradient ILC method. First, the conjugate gradient method with exact gradients is analyzed. Secondly, search direction resets are proposed for situations where the available evaluations of the system $J$ are noisy. \rev{Lastly}, the gradient descent algorithm is recovered. 

\subsection{Conjugate gradient descent with exact gradients}
\label{subsection:exact_grad}

In the model-free conjugate gradient ILC approach proposed in the previous section, only approximate gradients are available. In the theoretical case with exact gradients and function evaluations, the expressions for the search direction and the step size can be simplified, resulting in the well-known standard expressions for the conjugate gradient method. If $\hat{g}_{j} = g_{j} \: \forall \: j$, expression (\ref{eq:tau}) for $\tau_j$ reduces to
\begin{align}
	\tau_j = \frac{g_{j+1}\tran g_{j+1}}{{g}_{j}\tran {g}_{j}},
\end{align}
see e.g.  \cite[Section 11.3]{Golub2013} for a derivation. In addition, expression (\ref{eq:stepsize}) for $\varepsilon_j$ reduces to
\begin{align} \label{eq:step_ideal}
	\varepsilon_{j} = - \frac{ g_j\tran  g_j}{(J p_j)\tran (J p_j)}.
\end{align}
Since these expressions are based on the assumption that exact gradients are known, they do not hold for  $\hat{g}_{j} \neq g_{j}$. As a result, the standard expressions for conjugate gradient descent cannot be used in the model-free conjugate gradient descent ILC algorithm.

\subsection{Stochastic conjugate gradient ILC with noisy system evaluations}
\label{subsec:noisy}
In typical control applications, the available evaluations of the system $J$ are noisy. While the influence of added noise on the stochastic gradients used in stochastic conjugate gradient ILC is limited, the search direction and step size that were previously assumed to be exact depend on evaluations of $J$ and are therefore influenced by the noise. As a result, in a noisy experimental setting it is not possible to maintain the conjugacy of search directions over multiple iterations. Therefore, it can be useful to reset the search direction to the gradient after a number of iterations, which is a common practice for Krylov subspace methods, see e.g. \cite{Baker2009}. In Section \ref{sec:sims}, the proposed approach is simulated for a situation with noisy system evaluations.

\subsection{Recovering gradient descent}

A gradient descent ILC algorithm is recovered from the conjugate gradient ILC algorithm by taking $\tau_j=0$, such that $p_j = \hat{g}_j$. Then, the input $f_{j+1}$ at iteration $j+1$ is given by
\begin{align} \label{eq:adjilc}
	f_{j+1} = f_j - \varepsilon_j \hat{g}(f_j),
\end{align} 
and the optimal step size is found to be
	\begin{align} 
	\varepsilon_j =  \frac{ e_j \tran (J \hat{g}(f_j))}{(J \hat{g}(f_j))\tran (J \hat{g}(f_j))}.
\end{align}
For exact conjugate gradient (CG) and gradient descent (GD), it holds that a CG step is at least as good as a GD step, provided that $J\tran J$ is symmetric positive definite \cite[Thm. 11.3.3]{Golub2013}, \cite{Allwright1976}. In \cite{Aarnoudse2020c} the gradient estimates $\hat{g}(f_j)$ are used in a Robbins-Monro type stochastic gradient descent algorithm \cite{Robbins1951} \rev{and} $\varepsilon_j$ is chosen such that criteria of almost sure convergence for a Robbins-Monro algorithm are met. As shown in Section \ref{sec:sims}, this results in slow convergence compared to the proposed stochastic conjugate gradient approach.

%
%

\section{Example} \label{sec:sims}

In this section, model-free conjugate gradient ILC is illustrated on a random non-symmetric $21\times21$ MIMO system with 84 states, generated using the function $\mathtt{drss}$ in MATLAB. A Bode plot of a part of the system is shown in Figure \ref{fig:bode}. The disturbance $r$ consists of a step in all directions. The approach is illustrated for examples with exact and noisy evaluations of the system $J$.

\subsection{Stochastic conjugate gradient}

The proposed stochastic conjugate gradient descent is compared to stochastic approximation adjoint ILC \cite{Aarnoudse2020c}, to deterministic adjoint ILC \cite{Bolder2018} and to a deterministic conjugate gradient method. In \cite{Bolder2018}, the gradient of criterion (\ref{eq:cost}) is generated through $n_i \times n_o$ experiments, structured as
\begin{align} \label{eq:jtranspose}
	g(f_j) = \mathcal{T}^{n_i} \left( \sum_{l=1}^{n_i} \sum_{m=1}^{n_o} E^{lm} J E^{lm} \right) \mathcal{T}^{n_o} e_j,%
\end{align}%
where $E^{lm}$ consists of zeros, with a one on the $lm^{\mathrm{th}}$ entry. The deterministic conjugate gradient method from Section \ref{subsection:exact_grad} is implemented using deterministic gradients generated by (\ref{eq:jtranspose}).%

\begin{figure}
	\setlength\figureheight{.4\figurewidth}%
	\begin{subfigure}{.9\linewidth}%
		\centering
		\includegraphics{./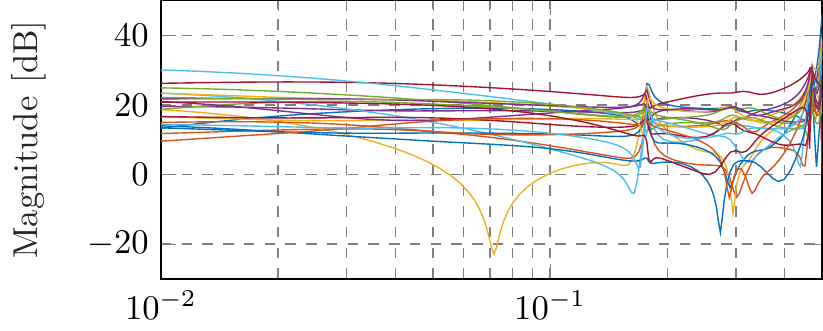}
	\end{subfigure}\\%
	\begin{subfigure}{.9\linewidth}%
		\centering
		\includegraphics{./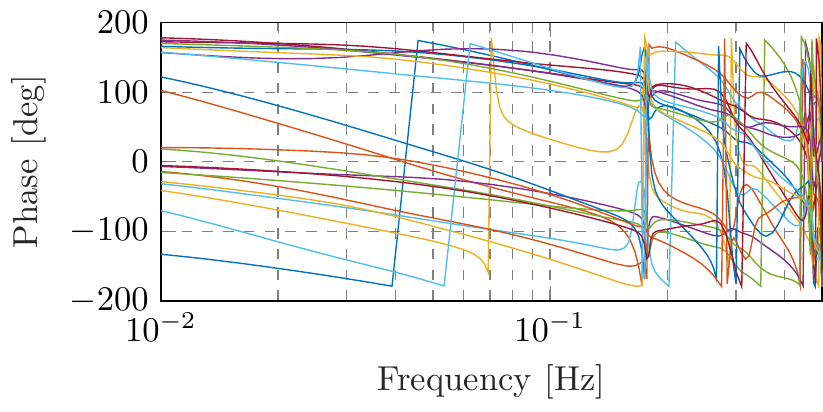}
	\end{subfigure}%
	\caption{Bode magnitude and phase plots of the last column of 21 subsystems of the random non-symmetric $21\times21$ MIMO system used in the example. \label{fig:bode}}%
\end{figure}%


\setlength\figureheight{.4\figurewidth}

\begin{figure}
	\centering
	\includegraphics{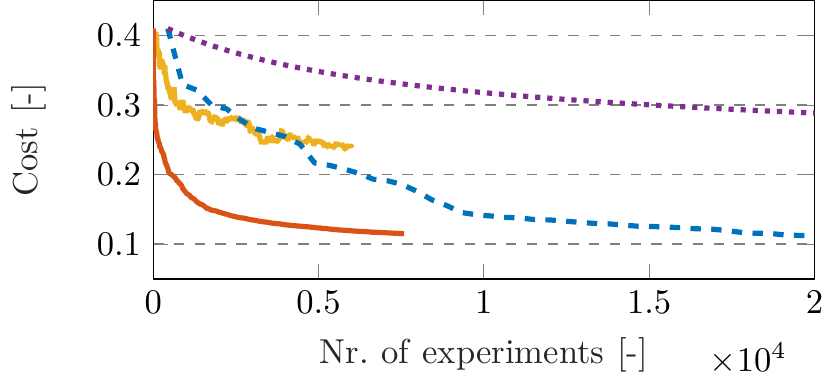}
	\caption{The cost as a function of the number of experiments for a non-symmetric $21\times 21$ system. Four approaches are shown: the proposed stochastic conjugate gradient ILC method (\protect\redline), deterministic conjugate gradient ILC (\protect\bluedash), stochastic gradient descent ILC (\protect\yelline) and deterministic gradient descent ILC (\protect\purdots). Stochastic conjugate gradient descent ILC requires far fewer experiments than other approaches to reach the same cost.}
	\label{fig:com_nonoise}
\end{figure}

In Figure \ref{fig:com_nonoise}, it is shown that the proposed stochastic conjugate gradient algorithm requires far fewer experiments to reach the same cost than the deterministic conjugate gradient algorithm. Both the stochastic and the deterministic conjugate gradient algorithm outperform the gradient descent algorithms. In addition, the smoothness of stochastic ILC is greatly improved by the line searches used in the conjugate gradient method.

\subsection{Stochastic conjugate gradients for noisy system evaluations} 

In typical control applications, evaluations of the system $J$ are noisy. As a result, gradient estimates generated through (\ref{eq:jtranspose}) are not deterministic, although the variance of these gradient estimates is typically smaller than that of those generated by (\ref{eq:grad_bar}). \rev{For a situation with noisy system evaluations, the proposed stochastic conjugate gradient ILC approach is compared to the conjugate gradient method of Section \ref{subsection:exact_grad}, which is designed for deterministic gradients.}%
	
\rev{In Figure \ref{fig:com_noise}, it is shown that the deterministic method diverges when noisy function evaluations are used. The stochastic conjugate gradient ILC algorithm is implemented with gradient estimates generated by respectively (\ref{eq:grad_bar}) and (\ref{eq:jtranspose}). It is shown that while the proposed algorithm converges for both gradient estimates, using gradient estimates obtained from a single experiment results in much faster convergence in terms of the required number of experiments.}


\begin{figure}[t]	
	\centering
	\includegraphics{./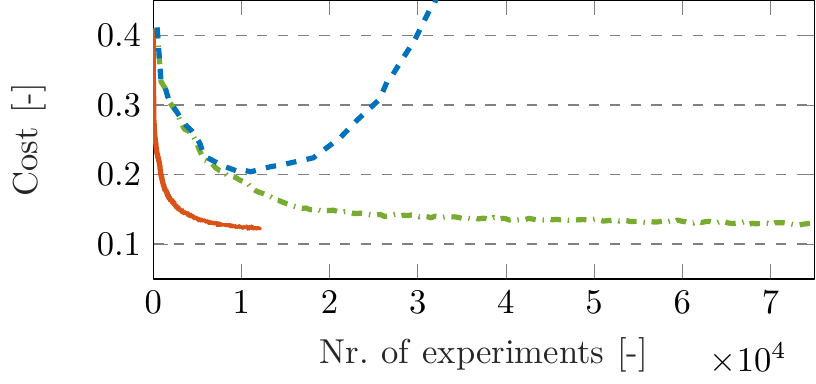}
	\caption{The cost as a function of the number of experiments for a noisy non-symmetric $21\times 21$ system. Stochastic conjugate gradient descent ILC based on (\ref{eq:grad_bar}) (\protect\redline) requires far fewer experiments than the implementation using (\ref{eq:jtranspose}) (\protect\greendashdot). Implementing the deterministic conjugate gradient (\protect\bluedash) method in case of noisy system evaluations and gradient estimates leads to divergent behavior.}
	\label{fig:com_noise}
\end{figure}

\section{Conclusions} \label{sec:conclusions}

A data-driven conjugate gradient method for model-free iterative learning control is introduced that uses efficient unbiased gradient estimates, such that it is suitable for massive MIMO systems. Dedicated experiments are used to determine conjugate search directions based on stochastic gradient estimates and corresponding optimal step sizes, resulting in a stochastic conjugate gradient ILC method. Compared to a deterministic model-free conjugate gradient method and to deterministic and stochastic model-free gradient descent algorithms, the proposed stochastic conjugate gradient method requires far fewer experiments to reach the same cost. In addition, it is shown that the proposed method converges when system evaluations are noisy, as opposed to the algorithm designed for deterministic gradients. \rev{Future developments involve extension of the framework to ILC with basis functions, experimental implementation and embedding in related iterative frameworks.}


\addtolength{\textheight}{-12cm}   



\bibliography{IEEEabrv,library}

\end{document}